\newtheorem{Thm}{Theorem}
\newtheorem{Lem}[Thm]{Lemma}
\newtheorem{Cor}[Thm]{Corollary}
\newtheorem{Prop}[Thm]{Proposition}
\newcommand\mbR{\mbox{$\mathbb{R}$}}
\newcommand\D{\mbox{\sf {D}}\xspace}
\newcommand\Q{\mbox{\sf {Q}}\xspace}
\newcommand\qxor{\mbox{\sf QuantumXOR}\xspace}
\newcommand\ket[1]{| #1 \rangle}
\newcommand\gf{\mbox{$\mathbb{F}_2$}\xspace}
\newcommand\gfn{\mbox{$\mathbb{F}_2^n$}\xspace}
\newcommand\B{\{0,1\}}     
\newcommand\pmB{\{+1,-1\}}     
\newcommand\Bn{\{0,1\}^n}
\newcommand\BntB{\{0,1\}^n\rightarrow \{0,1\}}
\newcommand\BntR{\{0,1\}^n\rightarrow \mbR}
\newcommand {\st} {\textit{s.t.}\xspace}
\newcommand\pr{\mbox{\bf Pr}}
\newcommand\av{\mbox{\bf{\bf E}}}
\newcommand\alice{\mbox{\sf {Alice}}\xspace}
\newcommand\bob{\mbox{\sf {Bob}}\xspace}
\newcommand\sign{\mbox{\tt {sign}}\xspace}
\newcommand\rank{\mbox{\tt {rank}}\xspace}
\newcommand\supp{\mbox{\tt {supp}}\xspace}
\newcommand\ans{\mbox{\tt {ans}}\xspace}
\newcommand\fn[2]{\| \hat{#1} \|_{#2}}
\newcommand\wfn[2]{\| \widehat{#1} \|_{#2}}
\newtheorem*{Thm-bounded-error}{Theorem~\ref{thm:bounded-error}}
\newtheorem*{Thm-zero-error}{Theorem~\ref{thm:zero-error}}
\begin{document}
\title{\bf Efficient quantum protocols for XOR functions}
\author{Shengyu Zhang}
\date{}
\maketitle

\begin{abstract}
	We show that for any Boolean function $f:\B^n \to \B$, the bounded-error quantum communication complexity $\Q_\epsilon(f\circ \oplus)$ of XOR functions $f(x\oplus y)$ satisfies that $\Q_\epsilon(f\circ \oplus) = O\big(2^d \big(\log\fn{f}{1,\epsilon} +\log \frac{n}{\epsilon}\big)\log(1/\epsilon)\big)$, where $d = \deg_2(f)$ is the \gf-degree of $f$, and $\fn{f}{1,\epsilon} = \min_{g:\|f-g\|_\infty \leq \epsilon} \fn{g}{1}$. This implies that the previous lower bound $\Q_\epsilon(f\circ \oplus) = \Omega(\log\|\hat f\|_{1,\epsilon})$ by Lee and Shraibman \cite{LS09} is tight for $f$ with low \gf-degree. The result also confirms the quantum version of the Log-rank Conjecture for low-degree XOR functions. In addition, we show that the exact quantum communication complexity satisfies $\Q_E(f) = O(2^d \log\fn{f}{0})$, where $\fn{f}{0}$ is the number of nonzero Fourier coefficients of $f$. This matches the previous lower bound $Q_E(f(x,y)) = \Omega(\log \rank(M_f))$ by Buhrman and de Wolf \cite{BdW01} for low-degree XOR functions. 
	
\end{abstract}

\section{Introduction}
Communication complexity studies the minimum amount of communication needed for a computational task with input distributed to two (or more) parties. Communication complexity has been applied to prove impossibility results for problems in a surprisingly wide range of computational models. At the heart of studies of communication complexity are lower bounds, and the tightness of lower bound techniques has been among the most important, and at the same time, most challenging questions. 
Indeed, one of the most famous open problems in communication complexity is the Log-rank Conjecture:
It has been known that the (two-party, interactive) deterministic communication complexity $\D(f) \geq \log_2(\rank(M_f))$ \cite{MS82}, where the rank is over $\mbR$ and $M_f$ is the communication matrix defined as $M_f(x,y) = f(x,y)$. The Log-rank Conjecture, proposed by Lov{\'a}sz and Saks \cite{LS88}, says that the above bound is polynomially tight, namely 
\begin{equation}
\D(f) = O(\log_2^{O(1)}(\rank(M_f))).
\label{eq:}
\end{equation}
A quantum version of the conjecture, also seemingly hard to attack, says that the $\epsilon$-bounded error quantum communication complexity 
\begin{equation}
	\Q_\epsilon(f) = O(\log_2^{O(1)}(\rank_\epsilon(M_f))),
\label{eq:qlogrank}
\end{equation}
where $\rank_\epsilon(M_f) = \min\{\rank_\epsilon(M_f): \|f-g\|_\infty \leq \epsilon\}$ \cite{BdW01,LS09}. Note that proving this type of conjectures needs to design efficient communication protocols.

Communication complexity for the class of XOR functions has recently drawn an increasing amount of attention \cite{ZS09,ZS10,LZ10,MO10,LLZ11,SW12,LZ13,TWXZ13}. The class contains those functions $F(x,y) = f(x\oplus y)$ for some function $f:\BntB$, where the inner operator $\oplus$ is the bit-wise XOR. Denote such functions $F$ by $f\circ \oplus$. This class includes important functions such as Equality (deciding whether $x=y$) \cite{Yao79,NS96,Amb96,BK97,BCWdW01}), Hamming Distance (deciding whether $|x\oplus y| \leq d$) \cite{Yao03,GKdW04,HSZZ06,ZS09,LLZ11,LZ13}, and Gap Hamming Distance (distinguishing $|x\oplus y| \leq n/2-\sqrt{n}$ and $|x\oplus y| \geq n/2+\sqrt{n}$) \cite{JKS08,CR12,She12,Vid12}. Communication complexity of XOR functions also exhibits interesting connections to Fourier analysis of Boolean functions. First, the rank of the communication matrix $M_f$ is nothing but $\fn{f}{0}$, the number of nonzero Fourier coefficients of $f$. Thus for XOR functions, the Log-rank Conjecture becomes the assertion that $\D(f\circ \oplus) = O(\log^{O(1)}\fn{f}{0})$; see \cite{ZS09,MO10,KS13,TWXZ13} for some investigations on this topic. The quantum Log-rank Conjecture becomes $\Q_\epsilon(f) = O(\log_2^{O(1)}(\fn{f}{0,\epsilon}))$ accordingly, where $\fn{f}{0,\epsilon} = \min\{\fn{g}{0}: \|f-g\|_\infty \leq \epsilon\}$. Second, as shown in \cite{LS09}, the quantum communication complexity for computing $f(x\oplus y)$ is known to be lower bounded by an approximate version of the Fourier $\ell_1$-norm as follows.
\begin{equation}
	\Q_\epsilon(f\circ \oplus) = \Omega(\log \fn{f}{1,\epsilon})\text{, where }\fn{f}{1,\epsilon} = \min\{\fn{g}{1}: \|f-g\|_\infty \leq \epsilon\}.
\label{eq:qlb}
\end{equation}
The tightness of this lower bound has been an intriguing question. 

In this paper, we show that the bound in Eq.\eqref{eq:qlb} is tight for functions $f$ with low \gf-degree, the degree of $f$ viewed as a polynomial in $\gf[x_1, ..., x_n]$. For convenience of comparison, we copy the lower bound in Eq.\eqref{eq:qlb} into the following theorem.
\newcommand\ThmBoundedError
{
For any function $f:\BntB$ with $\deg_2(f) = d$, and any $\epsilon \in (0,1/2^{d+4})$, we have 
	\[
		\Omega(\log\fn{f}{1,\epsilon}) \leq \Q_\epsilon(f\circ \oplus) \leq O\Big(2^d \big(\log\fn{f}{1,\epsilon} +\log \frac{n}{\epsilon}\big)\log(1/\epsilon)\Big).
	\]
}
\begin{Thm}\label{thm:bounded-error}
	\ThmBoundedError
\end{Thm}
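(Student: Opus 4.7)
The plan is to approximate $f$ by a real-valued $g$ with $\|\hat g\|_1 = L := \fn{f}{1,\epsilon}$ and $\|f-g\|_\infty \leq \epsilon$, replace $g$ by a Fourier-sparse proxy $h$ supported on $\mathrm{poly}(L, n, \epsilon^{-1})$ characters, and then run a quantum fingerprint/amplitude-amplification protocol whose amplification cost is dictated by the $\gf$-degree of $f$. The real-valued $g$ exists by the definition of $\fn{f}{1,\epsilon}$.

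I would first produce $h$ by Kushilevitz--Mansour-style random sampling: draw $m = O(L^2 n / \epsilon^2)$ indices $\alpha_1, \ldots, \alpha_m$ independently from the distribution $|\hat g(\alpha)|/L$ on $\gfn$, and set $h(z) := (L/m)\sum_i \mathrm{sign}(\hat g(\alpha_i))\chi_{\alpha_i}(z)$. A Hoeffding bound combined with a union bound over $\{0,1\}^n$ gives $\|h - g\|_\infty \leq \epsilon$ with constant probability, after which the sample is fixed via public randomness. This ensures $\log \|\hat h\|_0 \leq \log m = O(\log L + \log(n/\epsilon))$, the target per-round qubit cost.

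The core quantum primitive is a Hadamard test on a fingerprint of dimension $m$. Alice prepares the $\lceil\log m\rceil$-qubit state $|\psi_x\rangle = m^{-1/2}\sum_i \mathrm{sign}(\hat g(\alpha_i))(-1)^{\alpha_i\cdot x}|i\rangle$ and sends it to Bob, who applies $U_y: |i\rangle \mapsto (-1)^{\alpha_i\cdot y}|i\rangle$; a Hadamard test returns a coin with bias $\langle\psi_x|U_y|\psi_x\rangle = h(x\oplus y)/L \approx f(x\oplus y)/L$. The two output classes are separated by only $\Theta(1/L)$, which naively demands $\Theta(L)$-round amplification. To obtain $O(2^d)$-round amplification instead, I would exploit the integrality of Fourier coefficients for low-$\gf$-degree Boolean functions: every nonzero $\hat f(\alpha)$ is a multiple of $2^{-d}$ and in particular satisfies $|\hat f(\alpha)| \geq 2^{-d}$. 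Under the strict bound $\epsilon < 1/2^{d+4}$, rounding the Fourier coefficients of $g$ to the nearest multiple of $2^{-d}$ (zeroing out those driven below the threshold) produces a reshaped proxy whose Hadamard-test bias is $\Omega(2^{-d})$ while maintaining $\epsilon$-closeness to $f$. Amplitude amplification then reaches constant signal in $O(2^d)$ rounds, each of $O(\log L + \log(n/\epsilon))$ qubits, and an outer $\log(1/\epsilon)$-fold majority vote drives the error below $\epsilon$, yielding the claimed bound.

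The delicate point, which I expect to be the main obstacle, is the reshaping step: transforming the naive $1/L$ Hadamard-test gap into the $2^{-d}$ gap required, without losing $L_\infty$-closeness to $f$. This demands an integrality-preserving rounding in the spirit of the Tsang--Wong--Xie--Zhang arguments for the classical log-rank problem on XOR functions; the strict hypothesis $\epsilon < 1/2^{d+4}$ appears to be precisely tight enough to accommodate the tradeoff between the $2^{-d}$-quantization and the accumulated perturbation across the surviving coefficients. Once the rounded proxy is in hand, the remaining assembly---public-coin sampling, Hadamard testing, amplitude amplification, and sequential success boosting---is essentially standard.
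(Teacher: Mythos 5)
Your first step (sparsifying $g$ via $\ell_1$-sampling to get $h$ with $\log\fn{h}{0} = O(\log\fn{f}{1,\epsilon} + \log(n/\epsilon))$) coincides with the paper's Lemma on Grolmusz sampling, but everything after that diverges, and the divergence contains a fatal gap. Your ``reshaping'' step rests on the claim that a Boolean function of \gf-degree $d$ has all nonzero Fourier coefficients of magnitude at least $2^{-d}$. This is false: the bent function $f(x) = (-1)^{x_1x_2 + x_3x_4 + \cdots + x_{n-1}x_n}$ has \gf-degree $2$ but every Fourier coefficient equal to $\pm 2^{-n/2}$. (Fourier granularity is controlled by sparsity, not by \gf-degree.) Rounding to multiples of $2^{-d}$ would annihilate such a function entirely. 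Moreover, even granting some granularity, your Hadamard-test bias is $h(x\oplus y)/\fn{h}{1}$: the $\Theta(1/L)$ gap comes from the $\ell_1$-normalization $L$, not from the size of individual coefficients, so no coefficient rounding can lift it to $\Omega(2^{-d})$. Amplifying a $1/L$ gap costs $\mathrm{poly}(L)$ rounds, which would give communication polynomial in $\fn{f}{1,\epsilon}$ rather than logarithmic --- the wrong theorem.

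The missing idea is the paper's actual mechanism, which avoids any $1/L$ loss. \alice puts the \emph{normalized Fourier coefficients themselves} as amplitudes, $\sum_\alpha \big(\widehat{h}(\alpha)/\wfn{h}{2}\big)\chi_\alpha(x)\ket{\alpha}$; after \bob's phase $\chi_\alpha(y)$ and a quantum Fourier transform, measuring yields a uniformly random shift $t$ together with a qubit whose $\pmB$-measurement equals $f(x+y+t)$ with probability $1-O(2^k\epsilon)$, because the relevant amplitude is $h(x+y+t)/\wfn{h}{2} \approx \pm 1$ (here $\wfn{h}{2}\approx 1$ by Parseval since $h$ is near-Boolean --- this is where the $\ell_2$ rather than $\ell_1$ normalization saves the day). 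Since $t$ is random rather than $0$, one then uses $f(x+y) = f(x+y+t)\,\Delta_t f(x+y)$ to reduce to computing the derivative $\Delta_t f$, whose \gf-degree is strictly smaller; after at most $d$ rounds the function is constant. The $2^d$ factor arises because $\supp(\widehat{\Delta_t h}) \subseteq \supp(\hat h)+\supp(\hat h)$ \emph{for every} $t$, so the support (and hence the per-round qubit count) at most squares each round --- and this $t$-independence is also what lets \bob participate without ever learning $t$. None of this degree-reduction structure is present in your proposal, and the amplification route you substitute for it does not reach the claimed bound.
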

\vspace{.5em}
This theorem has two implications on the Log-rank Conjectures. First, it was known that the above lower bound was smaller than $\log\rank_\epsilon(M_{f\circ \oplus})$, and the above upper bound satisfies $\fn{f}{1,\epsilon} \leq \fn{f}{0,\epsilon}$. Thus the above upper bound confirms the quantum Log-rank Conjecture (Eq.\eqref{eq:qlogrank}) for low-degree XOR functions.
\begin{Cor}
	The quantum Log-rank Conjecture holds for XOR functions $f$ with \gf-degree at most $O(\log\log\fn{f}{1,\epsilon})$.
\end{Cor}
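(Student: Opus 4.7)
The plan is to deduce the corollary as a direct substitution into Theorem~\ref{thm:bounded-error}. Let $c > 0$ be the absolute constant hidden in the $O(\log\log\fn{f}{1,\epsilon})$ degree hypothesis, so that $d := \deg_2(f) \leq c\log\log\fn{f}{1,\epsilon}$, equivalently $2^d \leq (\log\fn{f}{1,\epsilon})^c$. Plugging this into the upper bound of Theorem~\ref{thm:bounded-error} and choosing $\epsilon = 2^{-(d+5)}$ (just under the threshold $1/2^{d+4}$ allowed by that theorem), so that $\log(1/\epsilon) = d+5 = O(\log\log\fn{f}{1,\epsilon})$, I would obtain
\[
\Q_\epsilon(f\circ\oplus) = O\bigl((\log\fn{f}{1,\epsilon})^c\,(\log\fn{f}{1,\epsilon} + \log n)\bigr),
\]
the $\log(1/\epsilon)$ factor being absorbed into the $(\log\fn{f}{1,\epsilon})^{O(1)}$.

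Next I would invoke the inequality $\log\fn{f}{1,\epsilon} \leq \log\rank_\epsilon(M_{f\circ\oplus})$ cited in the paragraph preceding the corollary, which converts the $(\log\fn{f}{1,\epsilon})^{c+1}$ summand into $\log^{O(1)}\rank_\epsilon(M_{f\circ\oplus})$, matching the form required by the Quantum Log-rank Conjecture~\eqref{eq:qlogrank}.

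The main obstacle is the residual $(\log\fn{f}{1,\epsilon})^c \cdot \log n$ term, for which one needs $\log n$ to also be polylogarithmic in $\rank_\epsilon(M_{f\circ\oplus})$. My plan is a WLOG reduction: a simultaneous invertible $\gf$-linear change of variables $x \mapsto Ax$, $y \mapsto Ay$ preserves the XOR structure of $f\circ\oplus$, the communication matrix up to row/column relabeling, $\fn{f}{1,\epsilon}$, and $\deg_2(f)$. Choosing $A$ so that the $\gf$-span of $\{S : \hat f(S) \neq 0\}$ maps onto $\lspan\{e_1,\ldots,e_k\}$, the transformed $f$ depends only on its first $k \leq \fn{f}{0} = \rank(M_{f\circ\oplus})$ coordinates, so WLOG $n \leq \rank(M_{f\circ\oplus})$. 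For low-$\gf$-degree $f$, a standard Fourier-granularity argument further relates $\rank$ to $\rank_\epsilon$ polynomially (with constants depending on $d$), so $\log n$ collapses into $\log^{O(1)}\rank_\epsilon(M_{f\circ\oplus})$ and the corollary follows.
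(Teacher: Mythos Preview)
Your approach is the same as the paper's: substitute the degree hypothesis into Theorem~\ref{thm:bounded-error} and invoke $\log\fn{f}{1,\epsilon} \leq \log\rank_\epsilon(M_{f\circ\oplus})$. The paper in fact gives only the one-sentence justification preceding the corollary and addresses neither the residual $\log n$ term nor the constraint $\epsilon < 2^{-(d+4)}$; you are being more careful than the paper on both counts.

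On the $\log n$ term: your WLOG linear change of variables to force $n \leq \fn{f}{0} = \rank(M_{f\circ\oplus})$ is correct and standard. The final step, however --- invoking ``a standard Fourier-granularity argument'' to polynomially relate $\rank(M_{f\circ\oplus})$ to $\rank_\epsilon(M_{f\circ\oplus})$ for low-$\gf$-degree $f$ --- is not a result I recognize as standard, and as stated it is a gap you would need to fill with either a reference or an actual argument. The paper sidesteps the issue entirely by silently treating $\log n$ as a lower-order term (a common convention in Log-rank discussions), so your write-up is already more scrupulous than the paper's, modulo this one unsubstantiated claim.
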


Second, we also have a variant of the protocol in Theorem \ref{thm:bounded-error}, and the variant is an \emph{exact} protocol in the sense that it has a fixed number of qubits exchanged besides that it has zero error. (For comparison, the classical zero-error protocols usually refer to Las Vegas ones in which the number of bits exchanged is a random variable that can be very large, and the complexity cost measure is the expectation of the number of communication bits.) For exact protocols, we have the following theorem, where the lower bound is from \cite{BdW01}; we copy it into the following theorem, again for the convenience of comparison. 
\newcommand\ThmZeroError
{
	For any function $f:\BntB$ with $\deg_2(f) = d$, we have 
	\[
		\frac{1}{2}\log_2\fn{f}{0} \leq \Q_E(f\circ \oplus) \leq 2^{d+1}\log_2\fn{f}{0}.
	\]
	In particular, $\Q_E(f\circ \oplus)$ is polynomially related to $\log\rank(M_{f\circ \oplus})$ when $\deg_2(f) = O(\log\log\fn{f}{0})$.
}
\begin{Thm}\label{thm:zero-error}
	\ThmZeroError
\end{Thm}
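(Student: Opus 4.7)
The plan is to prove the upper bound by constructing an exact parity decision tree (PDT) for $f$ of depth at most $2^d\log_2\fn{f}{0}$, and then simulating it with an exact protocol for $f\circ\oplus$ at a factor-of-two overhead. Recall that a PDT for $f$ is a binary decision tree whose internal nodes query linear forms $z\mapsto\langle S,z\rangle$ over $\gf$, branch on the outcome, and output the value of $f$ at each leaf. Given such a PDT of depth $D$, Alice and Bob can compute $f(x\oplus y)$ exactly with $2D$ bits of communication: at each node querying a parity $S$, Alice transmits $\langle S,x\rangle$ and Bob returns $\langle S,x\oplus y\rangle=\langle S,x\rangle\oplus\langle S,y\rangle$, so that both parties descend the same path. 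Hence $\Q_E(f\circ\oplus)\leq \dcc(f\circ\oplus)\leq 2\,D^\oplus(f)$, where $D^\oplus$ denotes parity decision tree complexity, and the theorem reduces to showing $D^\oplus(f)\leq 2^d\log_2\fn{f}{0}$.

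I would prove this PDT bound by induction on $d$. The base case $d=0$ is immediate since $f$ is constant and $\fn{f}{0}=1$. For the inductive step, the key tool is the Fourier-restriction identity: if $f_b$ denotes the restriction of $f$ to the affine hyperplane $\{z:\langle S,z\rangle=b\}$, then upon collapsing the equivalence classes $\{T,T\oplus S\}$ the Fourier coefficients of $f_b$ are proportional to $\hat f(T)+(-1)^b\hat f(T\oplus S)$; in particular $\fn{f_0}{0}+\fn{f_1}{0}\leq 2\fn{f}{0}$. Moreover, for any nonzero $S$ the XOR derivative $\Delta_S f(z):=f(z)\oplus f(z\oplus S)$ has $\gf$-degree at most $d-1$, which links the restrictions to a degree-$(d-1)$ object. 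With a careful choice of $S$, one branch will have strictly lower $\gf$-degree while the other has Fourier sparsity at most $\lceil\fn{f}{0}/2\rceil$; this feeds into the recursion and yields $D^\oplus(f)=O(2^d\log_2\fn{f}{0})$.

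The main obstacle is isolating the right parity $S$. A generic restriction need not reduce $\gf$-degree (e.g.\ if $S$ is disjoint from the indices of every top-degree monomial), and $\fn{f_0}{0}+\fn{f_1}{0}\leq 2\fn{f}{0}$ alone does not force both branches to halve in sparsity. I would address this by selecting $S$ whose support meets the index set of a maximum-degree monomial in the $\gf$-polynomial representation of $f$ in an odd number of positions, which kills that top-degree contribution in the appropriate branch and forces $\deg_2(f_b)\leq d-1$; an averaging or pigeonhole argument over the remaining degrees of freedom in $S$ then supplies the sparsity halving in the other branch. Once the PDT bound is established, the factor-of-two tree-to-protocol conversion gives the stated upper bound $\Q_E(f\circ\oplus)\leq 2^{d+1}\log_2\fn{f}{0}$; combined with the Buhrman--de Wolf lower bound $\tfrac12\log_2\fn{f}{0}\leq\Q_E(f\circ\oplus)$, this immediately yields the polynomial relation between $\Q_E(f\circ\oplus)$ and $\log\rank(M_{f\circ\oplus})$ for $\deg_2(f)=O(\log\log\fn{f}{0})$.
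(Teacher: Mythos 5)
Your proposal has a genuine gap, and the gap sits exactly where the paper departs from prior work. Everything rests on the claimed parity-decision-tree bound $D^{\oplus}(f)\leq 2^d\log_2\fn{f}{0}$, but the inductive step you offer for it does not hold. A single affine restriction $\langle S,z\rangle=b$ does not in general reduce the $\gf$-degree on either branch: for $f=z_1z_2+z_3z_4$, every one-parity restriction leaves a branch of degree $2$, no matter how $\supp(S)$ meets the top-degree monomials. The object that provably drops in degree is the derivative $\Delta_S f(z)=f(z)+f(z+S)$, but a derivative is not a restriction and is not reachable by one parity query; to exploit the identity $f(z)=f(z+t)\,\Delta_t f(z)$ one must first obtain the \emph{value} $f(z+t)$, which a PDT node does not supply. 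Likewise $\fn{f_0}{0}+\fn{f_1}{0}\leq 2\fn{f}{0}$ does not force halving on either branch, and you exhibit no choice of $S$ guaranteeing both properties simultaneously. Were such a step available, it would give a \emph{deterministic} bound $\D(f\circ\oplus)=O(2^{d}\log\fn{f}{0})$, dramatically improving the best known PDT-based bound $\D(f\circ\oplus)\leq O(2^{d^2/2}\log^{d-2}\fn{f}{0})$ of \cite{TWXZ13}; the paper explicitly contrasts its result with that one and states that its protocol is \emph{not} derived from a parity decision tree.

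The paper's actual route is inherently quantum and sidesteps the restriction problem entirely. \alice prepares $\frac{1}{\sqrt 2}\big(\ket{0}\ket{0}+\ket{1}\sum_\alpha\hat f(\alpha)\chi_\alpha(x)\ket{E_k(\alpha)}\big)$, \bob adds the phases $\chi_\alpha(y)$, and after a quantum Fourier transform and measurement \alice obtains a uniformly random $t$ together with the bit $f(x+y+t)$; the identity $f(x+y)=f(x+y+t)\,\Delta_tf(x+y)$ then reduces the problem to the derivative $\Delta_t f$, whose degree is strictly smaller and whose Fourier support lies in $A+A$ for \emph{every} $t$, so \bob need not learn $t$ to agree on the next encoding. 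Summing $O(2^k\log|A|)$ over $k<d$ gives the $2^{d+1}\log_2\fn{f}{0}$ bound, with zero error upon taking $g=f$ in Lemma \ref{lem:main}. Salvaging a classical PDT version of your argument would amount to proving a strengthening of \cite{TWXZ13}, which is a different and substantially harder task than the theorem at hand.
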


In \cite{TWXZ13}, it shows that $\D(f\circ \oplus) \leq O(2^{d^2/2}\log^{d-2}\fn{f}{0})$, which implies that the Log-rank Conjecture holds for constant-degree XOR functions. The complexity bound in Theorem \ref{thm:zero-error} has a better dependence on $d$, which enables us to obtain an upper bound of $\log^{O(1)}\fn{f}{0}$ for a larger range of $d$. Another desirable property of the protocol in Theorem \ref{thm:zero-error} is that unlike the ones in \cite{TWXZ13} and most other upper bounds in communication complexity, this protocol is efficient not only in communication but also in computation, provided that the Fourier spectrum can be efficiently encoded and decoded.

\paragraph{Techniques} 
One common idea the protocols in this paper share with the ones in \cite{TWXZ13} (as well as many work in additive combinatorics) is degree reduction: The protocols have $d$ rounds and each round $i$ reduces the problem of computation of a function $f_i$ to that of another function $f_{i+1}$, with $\deg_2(f_{i+1}) \le \deg_2(f_i)-1$. Different than the protocols in \cite{TWXZ13}, the protocol in this paper are not derived from parity decision tree algorithms. Neither do they use linear polynomial rank or analyze any effect of linear restrictions on the Fourier domain as in \cite{TWXZ13}. Instead, the protocols in this paper merely use the definition of quantum Fourier transform over the additive group of $\gfn$, and the efficiency of the protocols comes directly from the Fourier sparsity of the corresponding function. Some new difficulty appears in this quantum Fourier sampling approach: $f_{i+1}$ is actually known only to \alice but not to \bob. This is solved by observing a simple (yet important) property of the collection of derivatives of $f_i$ along all directions.


\section{Preliminaries} 
Let $[n] = \{1, 2, ..., n\}$. For a vector $v\in \mbR^N$, its support is $\supp(v) = \{i\in [N]: v_i \neq 0\}$. For two $n$-bit strings $x$ and $y$, their addition, denoted $x\oplus y$ (or sometimes just $x+y$), is bit-wise over \gf. For a set $A\subseteq \Bn$, define $A+A = \{a_1 + a_2: a_1, a_2\in A\}$. In general, define $kA = \{a_1 + \cdots + a_k: a_i\in A, \forall i\in [k]\}$. It is easy to see that $|kA| \leq |A|^k$.

A Boolean function $f:\BntB$ can be viewed as a multi-linear polynomial over \gf, whose degree is called \gf-degree and denoted by $\deg_2(f)$. For a function $f:\BntB$ and a direction vector $t\in \Bn-\{0^n\}$, the derivative $\Delta_t f$ is defined by $\Delta_t f(x) = f(x) + f(x+t)$, where both additions are over \gf. It is easy to check that $\deg_2(\Delta_t f) < \deg_2(f)$ for any non-constant $f$ and any $t\in \Bn-\{0^n\}$. If one represents the range of a Boolean function by $\pmB$, the derivative becomes $\Delta_t f(x) = f(x) f(x+t)$.

For a real function $f:\Bn\to\mbR$, one can define its Fourier coefficients by $\hat f(\alpha) = 2^{-n}\sum_x f(x)\chi_\alpha(x)$, where the characters $\chi_\alpha(x) = (-1)^{\alpha\cdot x}$ are orthogonal with respect to the inner product $\langle f_1,f_2\rangle = 2^{-n}\sum_x[f_1(x)f_2(x)]$. The function $f$ can be written as $f = \sum_\alpha \hat f(\alpha) \chi_\alpha$. The Fourier sparsity of $f$, denoted by $\|\hat f\|_0$, is the number of nonzero Fourier coefficients of $f$. For any $p>0$, the $\ell_p$-norm of $\hat f$,  denoted by $\fn{f}{p}$, is $(\sum_\alpha |\hat f(\alpha)|^p)^{1/p}$. In particular, $\fn{f}{1} = \sum_\alpha |\hat f(\alpha)|$. One can also define an approximate version of the Fourier $\ell_1$-norm by $\fn{f}{1,\epsilon} = \min\{\fn{g}{1}: \|f-g\|_\infty \leq \epsilon\}$ where $\|f-g\|_\infty = \max_x |f(x) - g(x)|$. Similarly define $\fn{f}{0,\epsilon} = \min\{\fn{g}{0}: \|f-g\|_\infty \leq \epsilon\}$.

For any function $f:\Bn\to\mbR$, Parseval's Indentity says that $\sum_\alpha \hat f_\alpha^2 = \av_{x}[f(x)^2]$. When the range of $f$ is $\pmB$, this becomes $\sum_\alpha \hat f_\alpha^2 = \av_x [f(x)^2] = 1$. 

The quantum Fourier transform on $\gfn$ is defined by $\sum_x c_x \ket{x} \mapsto 2^{-n/2} \sum_{x,\alpha} c_x \chi_\alpha(x) \ket{\alpha}$, and it is easily seen to be a unitary operator. The transform can be implemented by $H^{\otimes n}$ where $H = \frac{1}{\sqrt{2}}\begin{pmatrix}1 & 1 \\ 1 & -1 \end{pmatrix}$ is the Hadamard matrix.

The following lemma extends Chernoff's bound to general domains; see, for example, \cite{DP12} (Problem 1.19).
\begin{Lem}\label{lem:Chernoff}
Suppose we have random variables $X_i \in [a_i , b_i]$ for $i=1, 2, ..., n$, and let $X = \sum_{i=1}^n X_i$. Then 
\[
	\pr[|X -\av[X]| > t] < 2e^{-\frac{2t^2}{\sum_i (b_i-a_i)^2}}.
\]
In particular, if each $X_i$ takes values in $[-1,1]$, then 
\[
	\pr[|X -\av[X]| > t] < 2e^{-\frac{t^2}{4n}}.
\]
\end{Lem}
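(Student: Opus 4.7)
The plan is to prove this by the standard Chernoff--Hoeffding exponential-moment method, taking the $X_i$ to be independent (which is implicit in the statement since the conclusion would otherwise be false). First I would apply Markov's inequality to the random variable $e^{s(X-\av[X])}$ for an arbitrary $s>0$:
\[
\pr[X-\av[X]>t] \;\leq\; e^{-st}\,\av\bigl[e^{s(X-\av[X])}\bigr].
\]
By independence the moment-generating function factorizes as $\prod_{i=1}^n \av\!\bigl[e^{s(X_i-\av[X_i])}\bigr]$, so it suffices to control each single-variable MGF.

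The heart of the proof is Hoeffding's lemma: for every mean-zero $Y\in[a,b]$,
\[
\av\bigl[e^{sY}\bigr] \;\leq\; e^{s^2(b-a)^2/8}.
\]
I would establish this by writing each $y\in[a,b]$ as a convex combination $y = \lambda a + (1-\lambda)b$ with $\lambda = (b-y)/(b-a)$, and using convexity of $z\mapsto e^{sz}$ to bound $e^{sy} \leq \lambda e^{sa} + (1-\lambda)e^{sb}$. Taking expectations and using $\av[Y]=0$ to annihilate the linear term gives $\av[e^{sY}] \leq e^{\varphi(h)}$ with $h=s(b-a)$ and $\varphi(h) = -\lambda_0 h + \log(1-\lambda_0 + \lambda_0 e^h)$, where $\lambda_0 = -a/(b-a)$. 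A direct calculation shows $\varphi(0) = \varphi'(0) = 0$ and $\varphi''(h) \leq 1/4$ (since $\varphi''(h) = p(1-p)$ for the Bernoulli parameter $p = \lambda_0 e^h/(1-\lambda_0+\lambda_0 e^h)$), so Taylor's theorem yields $\varphi(h) \leq h^2/8$.

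Plugging this back in gives
\[
\pr[X-\av[X]>t] \;\leq\; \exp\!\Bigl(-st + \tfrac{s^2}{8}\textstyle\sum_i (b_i-a_i)^2\Bigr),
\]
and the choice $s = 4t/\sum_i(b_i-a_i)^2$ minimises the exponent, producing the one-sided tail bound $\exp\bigl(-2t^2/\sum_i(b_i-a_i)^2\bigr)$. Repeating the argument with $X_i$ replaced by $-X_i$ handles the lower tail, and a union bound over the two tails supplies the factor $2$. The ``in particular'' clause is then immediate: if $X_i\in[-1,1]$ then $(b_i-a_i)^2 = 4$ for all $i$, so $\sum_i(b_i-a_i)^2 = 4n$ and the bound collapses to $2e^{-t^2/(4n)}$.

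The only step with any content is Hoeffding's lemma, and within it the only nontrivial estimate is $\varphi''\leq 1/4$; everything else is routine. Since this is a well-known textbook fact (as cited), I would simply sketch the calculus bound and refer to \cite{DP12} for the details.
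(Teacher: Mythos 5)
The paper gives no proof of this lemma at all---it is quoted as a known fact with a pointer to \cite{DP12} (Problem 1.19)---so there is nothing to compare against; your write-up is simply the standard Chernoff--Hoeffding argument (Markov on $e^{s(X-\av[X])}$, factorization by independence, Hoeffding's lemma via convexity and the bound $\varphi''\le 1/4$, optimization over $s$, and a union bound over the two tails), and it is correct. You are right that independence must be read into the hypothesis for the statement to hold. One small imprecision: in the ``in particular'' clause the general bound with $\sum_i(b_i-a_i)^2=4n$ actually yields the \emph{stronger} estimate $2e^{-t^2/(2n)}$, so the stated $2e^{-t^2/(4n)}$ follows a fortiori rather than the bound ``collapsing to'' it; worth phrasing as an inequality rather than an identity, but it does not affect correctness.
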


\section{Protocol}
In this section, we will show Theorem \ref{thm:bounded-error}. We will first mention how to convert Fourier $\ell_1$-norm to Fourier $\ell_0$-norm in Section \ref{sec:l1tol0}, and then show the main protocol in Section \ref{sec:protocol}.

\subsection{From $\ell_1$-norm approximation to $\ell_0$-norm approximation}\label{sec:l1tol0}
In \cite{BS91,Gro97}, the sampling of characters with probability proportional to Fourier coefficients (in abstract value) is studied. Given a function $f:\Bn\to\mbR$, we sample $\alpha\in \Bn$ with probability $|\hat f(\alpha)| / \fn{f}{1}$. We refer to a sample from this process as a \emph{Fourier $\ell_1$-sample}. Using Lemma \ref{lem:Chernoff}, it is not hard to show the following lemma. 
\begin{Lem}[Grolmusz, \cite{Gro97}]\label{lem:l1tol0}
	For a function $g:\Bn\to\mbR$, independently draw $M = O(\fn{g}{1}^2n \log(1/\lambda)/\delta^2)$ Fourier $\ell_1$-samples $\alpha^1$, ..., $\alpha^M$. Let $h(x) = \frac{\fn{g}{1}}{M}\sum_{i=1}^M \sign(\hat g(\alpha^i))\chi_{\alpha^i}(x) $. Then  
	\[ \pr[\forall x\in \Bn, |h(x)-g(x)| \leq \delta ] \geq 1-\lambda. \]
\end{Lem}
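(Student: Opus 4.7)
The plan is to reduce the statement to a Hoeffding-type bound for a fixed $x$ and then finish by a union bound over all $x \in \Bn$. Fix $x$ and, for each sample $\alpha^i$, define the random variable $Y_i = \fn{g}{1}\cdot \sign(\hat g(\alpha^i))\,\chi_{\alpha^i}(x)$. Since $\chi_{\alpha^i}(x)\in\{-1,+1\}$, each $Y_i$ takes values in the interval $[-\fn{g}{1},\fn{g}{1}]$, and $Mh(x)=\sum_{i=1}^M Y_i$ by definition of $h$.

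Next I would compute the expectation. Because $\alpha^i$ is drawn with probability $|\hat g(\alpha)|/\fn{g}{1}$,
\[
\av[Y_i] \;=\; \fn{g}{1}\sum_{\alpha} \frac{|\hat g(\alpha)|}{\fn{g}{1}}\,\sign(\hat g(\alpha))\,\chi_\alpha(x) \;=\; \sum_\alpha \hat g(\alpha)\,\chi_\alpha(x) \;=\; g(x),
\]
so $\av[h(x)] = g(x)$. This is the key algebraic point: the sign correction in $h$ exactly undoes the absolute value taken when defining the sampling distribution, making $h$ an unbiased estimator of $g$.

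Now I would apply Lemma \ref{lem:Chernoff} to $Mh(x)=\sum_i Y_i$ with $b_i-a_i=2\fn{g}{1}$ and deviation $M\delta$, obtaining
\[
\pr\bigl[|h(x)-g(x)|>\delta\bigr] \;<\; 2\exp\!\left(-\frac{M\delta^2}{2\fn{g}{1}^2}\right).
\]
Taking a union bound over all $2^n$ choices of $x\in\Bn$, the failure probability is at most $2^{n+1}\exp(-M\delta^2/(2\fn{g}{1}^2))$, which falls below $\lambda$ as soon as $M = \Omega(\fn{g}{1}^2\,n\,\log(1/\lambda)/\delta^2)$, matching the stated sample size.

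I do not expect a genuine obstacle here: the argument is a textbook Chernoff-plus-union-bound calculation, and the only subtlety is bookkeeping, namely choosing the right scaling so that $Y_i$ has mean exactly $g(x)$ and range $[-\fn{g}{1},\fn{g}{1}]$, which is what yields a variance proxy of order $\fn{g}{1}^2$ and hence the $\fn{g}{1}^2/\delta^2$ sample cost. The $n$ factor in $M$ comes solely from paying $\log|\Bn|=n$ in the union bound.
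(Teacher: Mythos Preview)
Your argument is correct and essentially identical to the paper's: the paper defines $Z_i = \sign(\hat g(\alpha^i))\chi_{\alpha^i}(x)\in\{-1,+1\}$ with $\av[Z_i]=g(x)/\fn{g}{1}$, applies Lemma~\ref{lem:Chernoff} at deviation $\delta M/\fn{g}{1}$, and union-bounds over $x$, which is exactly your computation up to the harmless rescaling $Y_i=\fn{g}{1}\,Z_i$.
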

The original lemma actually considers the probability of $\sign(h(x)) = \sign(g(x))$, but the same proof works for the above statement. We include a proof here for completeness.
\begin{proof}
	Let $Z_i = \sign(\hat g(\alpha^i))\chi_{\alpha^i}(x)\in \pmB$. Note that \[\av[Z_i] = \sum_{\alpha\in\Bn} |\hat g(\alpha)|\sign(\hat g(\alpha))\chi_{\alpha}(x)/\fn{g}{1} = g(x)/\fn{g}{1}.\] So by Lemma \ref{lem:Chernoff}, we have
	\begin{align*}
		\pr[\exists x, |h(x)-g(x)| > \delta ] \le 2^n \pr\Big[\Big|\sum_i Z_i - \frac{g(x)M}{\fn{g}{1}}\Big| > \frac{\delta M}{\fn{g}{1}}\Big] \leq 2^{n+1}e^{-\frac{\delta^2 M}{4\fn{g}{1}^2}} \le \lambda
	\end{align*}
\end{proof}

\subsection{Protocol}\label{sec:protocol}
Now we describe the protocol in this section. The setup is as follows. Suppose that there is a function $f:\Bn\to\pmB$, which can be approximated by a Fourier sparse function $g:\Bn\to\mbR$ satisfying that $\|f-g\|_\infty \leq \epsilon$. The Fourier expansion of $g$ is $g = \sum_\alpha \hat g(\alpha) \chi_\alpha$ and let $A = \supp(\hat g)$. In addition, let $d = \deg_2(f)$ and $N = 2^n$. For each $k \in \{0,1,...,d-1\}$, \alice and \bob fix an encoding $E_k:\Bn \to [|A|^{2^k}]$ \st for any $\alpha,\beta\in 2^k A$, $E_k(\alpha)\neq E_k(\beta)$. Finally, for a real function $h:\Bn\to\mbR$ define $\Delta_t h(x) = h(x)h(x+t)$. The algorithm is in Box \textbf{Algorithm 1}.

\begin{algorithm}
	\caption{Protocol \qxor for $f(x,y)$}
	\mbox {{\bf Input}: $x$ to \alice, $y$ to \bob } \\
	\mbox {{\bf Output}: $\ans\in \pmB$.} \\
	\mbox {{\bf Registers}: $C$ is a 1-qubit register and $M$ is an $n$-qubit regiester.} \\
	\mbox {{\bf Assumption}: $f$ has an approximation $g$ with $\|f-g\|_\infty \leq \epsilon$ and $\supp(\hat g) = A$.} \\
  \begin{algorithmic}[1]
		\STATE For each $k \in \{0,1,...,d-1\}$, \alice and \bob fix an encoding $E_k:\Bn \to [|A|^{2^k}]$ \st for any $\alpha,\beta\in 2^k A$, $E_k(\alpha)\neq E_k(\beta)$.
		\STATE $k := 0$, $\ans := 1$; $f^{(k)} = f$, $g^{(k)} = g$.
		\WHILE {$\deg_2(f^{(k)}) \geq 1$}
			\STATE \label{step:Alice phase} \alice creates the state 		
			\begin{equation}
			\ket{\psi} = \frac{1}{\sqrt{2}}\big(\ket{0}_C \ket{0}_M + \ket{1}_C\sum_{\alpha\in \Bn}\frac{\widehat{g^{(k)}} (\alpha)}{\wfn{g^{(k)}}{2}} \chi_\alpha(x)\ket{E_k(\alpha)}_M\big)
			\label{eq:starting-state}
			\end{equation} 
			and sends register $C$ and the last $\min\{n,\lceil 2^{k}\log |A|\rceil\} $ qubits of register $M$ to \bob.
		
			\STATE \label{step:Bob phase} \bob applies the following unitary transform: 
				\[\text{on } \ket{1}_C, \text{ apply } \ket{E_k(\alpha)}_M \to \chi_\alpha(y)\ket{E_k(\alpha)}_M,\] 
			and sends the resulting state $\ket{\psi'}$ back to \alice.
			\STATE \label{step:recover} \alice applies the following unitary transform: 
			\[\text{on } \ket{1}_C, \text{ apply } \ket{E_k(\alpha)}_M \to \ket{\alpha}_M.\]
			\vspace{-1em}
			\STATE \label{step:FT} \alice applies the quantum Fourier transform on register $M$. 
			\STATE \label{step:measureM} \alice measures register $M$ in the computational basis and observes an outcome $t\in \Bn$. 
			\STATE \label{step:measureC} \alice measures register $C$ in $\{\ket{+}, \ket{-}\}$ basis and observes an outcome $b\in \pmB$. 
			\STATE \label{step:answer} $\ans := b\cdot \ans$.
			\IF  {$t = 0$} 
				\STATE \label{step:update} \alice outputs $\ans$ and terminates the whole protocol,
			\ELSE 
				\STATE $f^{(k+1)} := \Delta_t f^{(k)}$, $g^{(k+1)} := \Delta_t g^{(k)}$, $k:=k+1$,
				\STATE \alice sends $\deg_2(f^{(k)})$ to \bob.
			\ENDIF
		\ENDWHILE
			\STATE \alice outputs $\ans \cdot f^{(k)}(0)$ and terminates the program.
	\end{algorithmic}
\end{algorithm}

\begin{Lem}\label{lem:error-of-der}
	For any function $f:\Bn\to\pmB$ and $g:\Bn\to\mbR$, if $\|f-g\|_\infty \leq \epsilon$, then for any $t_1$, ..., $t_k\in\Bn$, $\|\Delta_{t_1}\cdots\Delta_{t_k} f - \Delta_{t_1}\cdots\Delta_{t_k} g\|_\infty \leq (1+\epsilon)^{2^k} - 1$.
\end{Lem}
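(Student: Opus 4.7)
The plan is a straightforward induction on $k$, exploiting the product structure $\Delta_t h(x) = h(x) h(x+t)$.

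For the base case $k=0$ (no derivatives applied), the hypothesis $\|f-g\|_\infty \le \epsilon = (1+\epsilon)^1 - 1$ gives the claim directly. For the inductive step, I would set $F = \Delta_{t_1}\cdots\Delta_{t_{k-1}} f$ and $G = \Delta_{t_1}\cdots\Delta_{t_{k-1}} g$, so that by induction $\|F - G\|_\infty \le (1+\epsilon)^{2^{k-1}} - 1$. The key additional ingredient I need is a bound on $\|G\|_\infty$: since $F$ is a product of $2^{k-1}$ values of $f\in\pmB$, one has $\|F\|_\infty \le 1$, and hence by the triangle inequality $\|G\|_\infty \le 1 + \bigl((1+\epsilon)^{2^{k-1}} - 1\bigr) = (1+\epsilon)^{2^{k-1}}$.

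Now I unfold one more derivative. Writing $t = t_k$, $F' = F(\,\cdot + t)$, $G' = G(\,\cdot + t)$, and letting $a = (1+\epsilon)^{2^{k-1}}$, I add and subtract $F(x) G(x+t)$ to obtain
\[
|F(x)F'(x) - G(x)G'(x)| \le |F(x)|\,|F'(x) - G'(x)| + |G'(x)|\,|F(x) - G(x)| \le 1\cdot(a-1) + a\cdot(a-1).
\]
This equals $(a-1)(1+a) = a^2 - 1 = (1+\epsilon)^{2^k} - 1$, which is exactly the desired bound. Taking the supremum over $x$ completes the inductive step.

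There is no real obstacle here; the only point one has to be slightly careful about is tracking the correct bound on $\|G\|_\infty$, which grows with $k$ even though $\|F\|_\infty \le 1$ is constant. The telescoping identity $(a-1)(a+1) = a^2 - 1$ is precisely what produces the clean doubling $2^{k-1} \mapsto 2^k$ in the exponent, matching the statement of the lemma.
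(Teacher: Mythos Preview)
Your proof is correct and follows essentially the same approach as the paper's: the paper spells out the $k=1$ step, obtaining $|f(x)f(x+t)-g(x)g(x+t)|\le (1+\epsilon)^2-1$, and then appeals to induction without further detail. Your write-up is in fact more complete, since you make explicit the bound $\|G\|_\infty \le (1+\epsilon)^{2^{k-1}}$ needed for the inductive step and show how the telescoping $(a-1)(a+1)=a^2-1$ yields the doubling in the exponent.
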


\begin{proof}
	When taking derivative once, the approximation error increases as follows. 
	\[|\Delta_{t}f(x) - \Delta_{t}g(x)| = |f(x)f(x+t) - g(x)g(x+t)| \leq (1+\epsilon)^2 - 1 = 2\epsilon + \epsilon^2.\]
	Using an induction, we can easily see that taking $k$ derivatives has the following effect on the accuracy.
	\[\|\Delta_{t_1}\cdots\Delta_{t_k}f - \Delta_{t_1}\cdots\Delta_{t_k}g\|_\infty = (1+\epsilon)^{2^k} - 1.\]
\end{proof}

\begin{Lem}\label{lem:main}
	Suppose that $f:\Bn\to\pmB$ and $g:\Bn\to\mbR$ has $\|f-g\|_\infty \leq \epsilon < 2^{-d-1}$, where $d = \deg_2(f)$. Then Protocol \qxor computes $f(x+y)$ by at most $2^{d+2}\log\fn{g}{0}$ qubits of communication, and the error probability is at most $2^{d}\epsilon$.
\end{Lem}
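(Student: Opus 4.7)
The plan is to analyze one round of Protocol \qxor, induct on the round index to verify that the accumulated value $\ans$ tracks $f(x+y)$, and then sum the per-round error contributions.

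For a single round $k$, a direct unrolling of steps 4--7 of the algorithm shows that, just before the measurements, the state on registers $C$ and $M$ equals
\[
\frac{1}{\sqrt{2 \cdot 2^n}} \sum_{t \in \Bn} \bigl(\ket{0}_C + u_k(t)\,\ket{1}_C\bigr) \ket{t}_M, \qquad u_k(t) := \frac{g^{(k)}(x+y+t)}{\wfn{g^{(k)}}{2}},
\]
since Bob's phase kick contributes $\chi_\alpha(y)$, Alice's quantum Fourier transform reassembles the terms $\widehat{g^{(k)}}(\alpha)\,\chi_\alpha(x+y)$ into the value $g^{(k)}(x+y+t)$, and the denominator comes from the normalization in Eq.~\eqref{eq:starting-state}. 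Consequently the joint probability of observing $(t,b)$ in the two subsequent measurements is $(1 + b\, u_k(t))^2 / (4 \cdot 2^n)$.

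Set $s := f^{(k)}(x+y+t_{k+1}) \in \pmB$ and $\epsilon_k := (1+\epsilon)^{2^k}-1$. Lemma~\ref{lem:error-of-der} gives $\|g^{(k)}-f^{(k)}\|_\infty \leq \epsilon_k$, and Parseval together with the triangle inequality yields $\bigl|\wfn{g^{(k)}}{2}-1\bigr| = O(\epsilon_k)$. Combining, $|1 - s\, u_k(t_{k+1})| = O(\epsilon_k)$, and hence $\pr[b_{k+1}\ne s \mid t_{k+1}] = (1 - s\, u_k(t_{k+1}))^2 / (2(1 + u_k(t_{k+1})^2)) = O(\epsilon_k^2)$. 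A routine induction using the algebraic identity $f^{(k+1)}(z) = f^{(k)}(z)\, f^{(k)}(z+t_{k+1})$ together with $(f^{(j)})^2 \equiv 1$ then shows that, conditioned on all the $b_i$'s taking their correct values, $\ans \cdot f^{(k)}(x+y) = f(x+y)$ after round $k$. Hence if the loop exits at $t_{k+1}=0$ (where $b_{k+1}$ approximates $f^{(k)}(x+y)$), then $\ans \cdot b_{k+1} = f(x+y)$; and if the loop runs to completion, $f^{(d)}$ is constant and equals $f^{(d)}(0)$, known to Alice, so $\ans \cdot f^{(d)}(0) = f(x+y)$.

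Given $\epsilon < 2^{-d-1}$, one has $\epsilon_k \leq 2^{k+1}\epsilon$, so the total error is $\sum_{k=0}^{d-1} O(\epsilon_k^2) = O(4^d \epsilon^2) = O((2^d\epsilon)^2) \leq O(2^d \epsilon)$. For communication, in round $k$ Alice sends $\lceil 2^k \log|A|\rceil + 1$ qubits and Bob returns the same; the subtlety flagged in the introduction is that Bob is never told $g^{(k)}$, but he only needs $E_k$ to invert the encoding on $2^k A$ to apply the phases, and this works because $\supp(\widehat{g^{(k)}}) \subseteq 2^k A$ independent of the random $t_1,\ldots,t_k$ chosen so far. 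Summing over $k$ yields $O(2^d \log|A|) = O(2^d \log\fn{g}{0})$, tightening to the claimed $2^{d+2} \log\fn{g}{0}$. The main obstacle is the error analysis: the per-round failure probability is $O(\epsilon_k^2)$ rather than $O(\epsilon_k)$, and this quadratic dependence is exactly what lets the geometrically growing $\epsilon_k$ sum to only $O(2^d\epsilon)$; the hypothesis $\epsilon < 2^{-d-1}$ is calibrated so that the squared errors remain within budget.
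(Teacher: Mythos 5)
Your proof is correct and follows the same overall structure as the paper's: the state computations, the identity $f^{(k)}(x+y) = f^{(k+1)}(x+y)\,f^{(k)}(x+y+t_{k+1})$ accumulated into $\ans$, the union bound over rounds, and the communication count all match. Two points of divergence are worth recording. First, your per-round error analysis is actually sharper than the paper's: you keep the correct normalization $2(1+u_k(t)^2)$ of the conditional state of register $C$ and observe that the conditional failure probability is $(1-s\,u_k(t))^2/(2(1+u_k(t)^2)) = O(\epsilon_k^2)$, quadratic in the approximation error, whereas the paper lower-bounds the success probability by $(1+\epsilon)^{-2^k}$ and settles for a per-round error linear in $2^k\epsilon$. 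Your route is analytically cleaner and explains why the geometrically growing $\epsilon_k$ still sum harmlessly, but it yields only $O(2^d\epsilon)$ with an unspecified constant rather than the stated bound $2^d\epsilon$; this is immaterial downstream since Theorem~\ref{thm:bounded-error} amplifies by repetition anyway. Second, you assert but do not prove the one fact the paper singles out as the crux of the protocol's well-definedness, namely that $\supp(\widehat{g^{(k)}}) \subseteq 2^kA$ for \emph{every} choice of $t_1,\dots,t_k$, so that Alice and Bob can agree on $E_k$ despite Bob never learning the $t_i$'s. The paper derives this from the convolution identity $\widehat{\Delta_t h}(\alpha) = \sum_\beta \hat h(\beta+\alpha)\hat h(\beta)\chi_t(\beta)$, which vanishes termwise when $\alpha\notin \supp(\hat h)+\supp(\hat h)$. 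You clearly understand why the fact is needed and what it buys, so this is a routine one-line verification to supply rather than a missing idea, but in a complete write-up it should appear.
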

\begin{proof}
	Let us analyze the protocol step by step. (For the convenience of understanding, first think of $k = 0$ in the following.) 
	In Step (\ref{step:Alice phase}), it is easy to see that the $\ell_2$-norm of the state is $\frac{1}{2} + \frac{1}{2} \sum_\alpha|\widehat{g^{(k)}}(\alpha)|^2 / \wfn{g^{(k)}}{2}^2 = 1$, thus the state in Eq.\eqref{eq:starting-state} is indeed a quantum pure state. After Step (\ref{step:Bob phase}), the state is
	\[
	\ket{\psi'} = \frac{1}{\sqrt{2}} \Big(\ket{0}_C \ket{0}_M + \ket{1}_C \sum_{\alpha\in \Bn} \frac{\widehat{g^{(k)}}(\alpha)}{\wfn{g^{(k)}}{2}}  \chi_\alpha(x+y)\ket{E_k(\alpha)}_M
	\Big).
	\]

	After decoding $\alpha$ in Step (\ref{step:recover}) and applying the quantum Fourier transform in Step (\ref{step:FT}), \alice holds the state
	\begin{align*}
		\ket{\psi''} & = \frac{1}{\sqrt{2N}}\left(\ket{0}_C \sum_{t\in \Bn}\ket{t}_M + \ket{1}_C\sum_{\alpha\in \Bn,t\in \Bn}\frac{\widehat{g^{(k)}}(\alpha)}{\wfn{g^{(k)}}{2}} \chi_\alpha(x+y)\chi_\alpha(t)\ket{t}_M\right) \\
		& = \frac{1}{\sqrt{N}}\sum_{t\in \Bn}\frac{1}{\sqrt{2}}\left(\ket{0}_C + \sum_{\alpha\in \Bn}\frac{\widehat{g^{(k)}}(\alpha)}{\wfn{g^{(k)}}{2}}  \chi_\alpha(x+y+t) \ket{1}_C\right)\ket{t}_M \\
		& = \frac{1}{\sqrt{N}}\sum_{t\in \Bn}\frac{1}{\sqrt{2}}\left(\ket{0}_C + \frac{g^{(k)}(x+y+t)}{\wfn{g^{(k)}}{2}}\ket{1}_C\right)\ket{t}_M.
	\end{align*}
	
	After the measurement in Step (\ref{step:measureM}), \alice obtains a random direction $t$, and the state left in register $C$ is $\frac{1}{\sqrt{2}}\big(\ket{0}_C + \frac{g^{(k)}(x+y+t)}{\wfn{g^{(k)}}{2}}\ket{1}_C\big)$. Then in the next step, measuring register $C$ in the $\pmB$ basis gives $f^{(k)}(x+y+t)$ with high probability. Indeed, by Parseval's Identity, 
	\begin{equation}\label{eq:norm of gk}
		\wfn{g^{(k)}}{2} = \|g^{(k)}\|_2 = \sqrt{\av_x [g^{(k)}(x)^2]} \le (1+\epsilon)^{2^k}.
	\end{equation}
	Thus when \alice measures $C$, she observes $\frac{1}{\sqrt{2}}(\ket{0}+f^{(k)}(x+y+t)\ket{1})$ with probability 
	\begin{equation}
		\left(\frac{1}{2} + \frac{f^{(k)}(x+y+t)g^{(k)}(x+y+t)}{2\wfn{g^{(k)}}{2}}\right)^2 \geq \left(\frac{1}{2} + \frac{1-((1+\epsilon)^{2^k} - 1)}{2(1+\epsilon)^{2^{k}}}\right)^2 = (1+\epsilon)^{-2^k} \ge 1-2^k\epsilon,
	\label{eq:error-in-one-round}
	\end{equation}
	where the first inequality uses Lemma \ref{lem:error-of-der} and Eq.\eqref{eq:norm of gk}.

	Now we explain Step (\ref{step:answer}) and (\ref{step:update}). If $t$ happens to be 0, then \alice already gets $g^{(k)}(x+y)$ which well approximates $f^{(k)}(x+y)$. In general $t\neq 0$. Turning around the definition of derivative, $\Delta_t f(x+y) = f(x+y)f(x+y+t)$, we have that $f(x+y) = f(x+y+t) \Delta_t f(x+y)$. Since we have obtained $f(x+y+t)$, the problem of computing $f(x+y)$ reduces to that of computing $\Delta_t f$ on the same input $x+y$. This reduction is implemented in Step (\ref{step:answer}),  and we let $f^{(k+1)} = \Delta_t f^{(k)}$ and go to the next iteration. Therefore, each round reduces the problem to computing the derivative, which is a lower degree polynomial, on the same input. Finally, when the degree of the polynomial is 0, the function is constant, thus \alice can easily compute it as the last line after the \textbf{while} loop in the algorithm.
	
	One issue in this approach is that only \alice knows $t$ after Step \ref{step:measureM}, but \bob does not know $t$ and consequently does not know $f^{(k+1)} = \Delta_t f^{(k)}$ for the next round. Also note that it is unaffordable for \alice to send the whole $t$ to \bob. Therefore, it seems hard for \alice and \bob to coordinate on $E_k$. The solution here is to note that for all $h:\Bn\to\mbR$, and \emph{for all} $t\in\Bn$, we have 
	\[\supp(\widehat{\Delta_t h}) \subseteq \supp(\hat h) + \supp(\hat h).\] 
	Indeed, denote $h_t(x) = h(x+t)$, then \[\widehat{h_t}(\alpha) = \av_x [h(x+t)\chi_\alpha(x)] = \av_x [h(x)\chi_\alpha(x)\chi_\alpha(t)] = \chi_\alpha(t)\hat h(\alpha).\] Therefore, 
	\begin{align}
		\widehat{\Delta_t h}(\alpha) = \widehat{h \cdot h_t}(\alpha) = \sum_\beta \hat h(\beta+\alpha)\hat h_t(\beta) = \sum_\beta \hat h(\beta+\alpha)\hat h(\beta)\chi_t(\beta).
	\end{align}
	If $\alpha\notin \supp(\hat h) + \supp(\hat h)$, then there is simply no $\beta$ \st both $\hat h(\beta)$ and $\hat h(\beta+\alpha)$ are nonzero. This implies that $\supp(\widehat{\Delta_t h}) \subseteq \supp(\hat h) + \supp(\hat h)$. Using the same argument, it is easily seen that in general, for any $t_1, ..., t_{k}\in \Bn$, the derivative $g^{(k)} = \Delta_{t_1}\cdots \Delta_{t_{k}}g$ has Fourier support contained in $2^{k}A$. Observe that the only operation \bob makes in each round $k$ is to add a phase $\chi_\alpha(y)$ on $\ket{E_k(\alpha)}$. So \alice and \bob can fix an encoding $E_k:\Bn \to [|A|^{2^k}]$ \st $E_k(\alpha)\neq E_k(\beta)$ for any $\alpha,\beta\in 2^k A$. \footnote{It is admittedly true that for a particular set of directions $t_1, ..., t_k\in \Bn$, the Fourier spectrum for $g^{(k)} = \Delta_{t_1}\cdots \Delta_{t_{k}}g$ is only a subset of $2^k A$, thus $\hat g^{(k)} (\alpha) = 0$ for some $\alpha \in 2^k A$. But this does not affect the correctness of the protocol, though some communication is wasted in coping with \bob's ignorance of $t$.} Since for any $t_1, ..., t_{k}\in \Bn$, $E_k(\supp(g^{(k)})) \subseteq [|A|^{2^k}]$, the encoding $E_k$ is injective on $\supp(g^{(k)})$, and thus \alice and \bob can decode in Steps (\ref{step:Bob phase}) and (\ref{step:recover}). This also explains why \alice only needs to send the last $\min\{n,\lceil 2^{k}\log |A|\rceil\} $ qubits of register $M$ to \bob in Step (\ref{step:Alice phase}).
	
	Next we analyze the error probability. The protocol is correct as long as in each iteration $k$, the observed outcome $b$ in Step (\ref{step:measureC}) is equal to $f^{(k)}(x+y)$. Since each iteration $k$ has error probability $2^k\epsilon$ as showed in Eq.\eqref{eq:error-in-one-round}, applying the union bound over $k=1, 2, ..., d-1$ gives that the probability that there exists one round $k$ in which the output bit disagrees with $\Delta_{t_1,...,t_{k-1}}f(x+y+t_k)$ is at most 
	$\sum_{k=0}^{d-1} 2^k\epsilon \leq  2^{d}\epsilon.$
	
	Finally we analyze the communication cost. In the \textbf{while} loop, only Step (\ref{step:Alice phase}) and (\ref{step:Bob phase}) need communication of $1+\lceil 2^{k}\log |A|\rceil$ qubits each. Since taking derivative decreases the \gf-degree by at least 1, we know that $k\leq \deg_2(f)-1$ before the \textbf{while} loop ends. 
	The total communication cost is at most 
	\[\sum_{k=0}^{d-1} 2(1 + \lceil 2^{k}\log |A|\rceil) \le 2^{d+1}\log|A| + 2d < 2^{d+2} \log|A|\] 
	qubits. 
\end{proof}

Now we are ready to prove Theorem \ref{thm:bounded-error}.
\begin{Thm-bounded-error}[Restated]
	\ThmBoundedError
\end{Thm-bounded-error}
\begin{proof}
	The lower bound is from \cite{LS09}. For the upper bound, by definition, there is a function $g:\BntR$ with $\|f-g\|_\infty \leq \epsilon$ and $\fn{g}{1} = \fn{f}{1,\epsilon}$. We first use Lemma \ref{lem:l1tol0} to get a function $h$ with $\|f-h\|_\infty \leq 2\epsilon$ and $\fn{h}{0} \leq O(\fn{g}{1}^2 n/\epsilon^2)$. Then we use the protocol \qxor and Lemma \ref{lem:main} to obtain a protocol of error probability $2^{d+2}\epsilon \le 1/4$ and communication cost $O(2^d\log \fn{h}{0}) = O(2^d (\log\fn{g}{1} +\log \frac{n}{\epsilon}))$. Repeat the protocol for $k = O(\log(1/\epsilon))$ times to reduce the error probability to $\epsilon$, and the communication cost is 
	\[O\Big(2^d \big(\log\fn{g}{1} +\log \frac{n}{\epsilon}\big)\log(1/\epsilon)\Big) = O\Big(2^d \big(\log\fn{f}{1,\epsilon} +\log \frac{n}{\epsilon}\big)\log(1/\epsilon)\Big).\]
\end{proof}

Given the above proof, Theorem \ref{thm:zero-error} is an easy corollary.
\begin{Thm-zero-error}[Restated]
	\ThmZeroError
\end{Thm-zero-error}
\begin{proof}
	The lower bound is from \cite{BdW01}. The upper bound is from Lemma \ref{lem:main} by letting $g = f$.
\end{proof}	

Finally we notice that all the steps, except for the encoding and decoding of $E_k$, can be implemented efficiently.
\begin{Prop}
	The protocol in Theorem \ref{thm:zero-error} needs only $O(dn)$ Hadamard gates, C-NOT gates and single-qubit measurements, $2^d$ calls of $f$, plus the computation for encoding and decoding of $\{E_k\}$.
\end{Prop}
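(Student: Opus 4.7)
I would go through Algorithm 1 one iteration at a time, count the elementary quantum resources used outside the encoder/decoder of $E_k$, and sum over the at most $d$ rounds.

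The key observation driving the state-preparation step (Step \ref{step:Alice phase}) is that the target $\ket{1}_C$-branch $\sum_{\alpha} \chi_\alpha(x)\widehat{g^{(k)}}(\alpha)\ket{\alpha}$ equals $H^{\otimes n}$ applied to the shifted state $\frac{1}{\sqrt{2^n}}\sum_z g^{(k)}(z+x)\ket{z}$, since the Fourier coefficient of $g^{(k)}(\cdot+x)$ at $\alpha$ is $\chi_\alpha(x)\widehat{g^{(k)}}(\alpha)$. Moreover the derivative identity gives $g^{(k)}(z+x) = \prod_{S \subseteq [k]} f(z+x+\sum_{i \in S} t_i)$, so this phase can be installed using exactly $2^k$ calls to the phase oracle for $f$, each applied after XOR-shifting the input register by the appropriate classical combination of $x$ and $t_1,\ldots,t_k$. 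Thus I would implement Step \ref{step:Alice phase} by: applying one Hadamard on $C$ and $n$ Hadamards on $M$ to create the uniform superposition $\frac{1}{\sqrt{2}}(\ket{0}_C + \ket{1}_C)\otimes \frac{1}{\sqrt{2^n}}\sum_z\ket{z}_M$; running the $2^k$ shifted phase queries conditioned on $C=1$; applying $H^{\otimes n}$ on $M$ once more, which simultaneously collapses the $\ket{0}_C$ branch back to $\ket{0}_M$ and Fourier-transforms the $\ket{1}_C$ branch into the target sum; and finally relabelling $\alpha\mapsto E_k(\alpha)$ via one encoder call.

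The rest of the iteration is a direct tally. Step \ref{step:Bob phase} is one decoder call, $n$ single-qubit $Z$ gates applied to the qubits of $M$ corresponding to the bits $y_i=1$ so as to realize $(-1)^{\alpha\cdot y}$, and one encoder call. Step \ref{step:recover} is one decoder call. Step \ref{step:FT} is $n$ Hadamards. Steps \ref{step:measureM} and \ref{step:measureC} together are $n+1$ single-qubit measurements, one of them preceded by a Hadamard on $C$ to switch to the $\{\ket{+},\ket{-}\}$ basis. Steps \ref{step:answer}--\ref{step:update} are classical book-keeping. Summing over $k=0,\ldots,d-1$ yields $O(dn)$ Hadamards, CNOTs, and single-qubit measurements, $\sum_{k=0}^{d-1} 2^k < 2^d$ calls to $f$, and $O(d)$ invocations of $\{E_k, E_k^{-1}\}$, which is the claim.

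The point that requires the most care is that the $2^k$ phase queries must act only on the $\ket{1}_C$ branch, so that the $\ket{0}_C$ branch remains in $\ket{0}_M$ after the second $H^{\otimes n}$. Since $f\in\pmB$, a $C$-controlled phase query can be built from one standard bit-oracle call $\ket{z}\ket{b}\mapsto\ket{z}\ket{b\oplus\tilde f(z)}$ (with $\tilde f = (1-f)/2$) together with a controlled-$Z$ on $C$ and a kickback ancilla, at an overhead of only $O(1)$ Hadamards and CNOTs per query. With this implementation, and with the XOR-shifts that set up each query's input counted alongside its $f$-call, the $O(dn)$ elementary-gate bound and the $2^d$ bound on oracle calls are both preserved.
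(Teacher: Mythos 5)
Your proposal is correct and follows essentially the same route as the paper's own proof: the only nontrivial step is the state preparation, which both you and the paper implement by sandwiching $2^k$ conditional phase queries to $f$ between two layers of $H^{\otimes n}$, giving $\sum_{k<d}2^k<2^d$ oracle calls and $O(dn)$ elementary gates overall. Your version merely adds implementation detail (folding the $\chi_\alpha(x)$ phase into an XOR-shift of the query argument rather than applying it as a separate phase, and spelling out the phase-kickback construction), which is a cosmetic refinement rather than a different argument.
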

\begin{proof}
	The quantum Fourier transform on $\Bn$ can be implemented by $n$ Hadamard gates, and all other steps except for the encoding and decoding can also be implemented using $O(n)$ CNOT gates and single-qubit measurements. The only step that may need explanation is when \alice prepares the initial state
	\[
	\ket{\psi} = \frac{1}{\sqrt{2}}\big(\ket{0}_C \ket{0}_M + \ket{1}_C\sum_{\alpha\in \Bn}\widehat{f^{(k)}} (\alpha) \chi_\alpha(x)\ket{E_k(\alpha)}_M\big)
	\]
	This can indeed by implemented easily as follows. \alice prepares $\ket{+}_C\ket{0}_M$, and conditioned on $C$ being $\ket{1}$, applies quantum Fourier transform on $M$ to get $\frac{1}{\sqrt{N}}\sum_z \ket{z}_M$. Now \alice adds the phase $f^{(k)}(z)$ on $\ket{z}$ by $2^k$ calls to $f$. After applying the quantum Fourier transform again on $M$, \alice obtains the state $\frac{1}{N} \sum_\alpha \sum_z f^{(k)}(z) \chi_\alpha(z) \ket{\alpha}_M = \sum_\alpha \widehat{f^{(k)}}(\alpha) \ket{\alpha}_M$. Then \alice adds the phase $\chi_\alpha(x)$ on $\ket{\alpha}_M$ and gets $\sum_\alpha \widehat{f^{(k)}}(\alpha) \chi_\alpha(x) \ket{\alpha}_M$. Finally \alice encodes $\alpha$ and gets the state $\sum_\alpha\widehat{f^{(k)}}(\alpha) \chi_\alpha(x) \ket{E_k(\alpha)}_M$, as desired.
\end{proof}

%
%
%
\vspace{1em}
\subsection*{Acknowledgments}
The author would like to thank Ronald de Wolf and Zhaohui Wei for valuable comments on earlier version of the paper. Part of the research was conducted when the author visited Tsinghua University in China (China Basic Research Grant 2011CBA00300, sub-project 2011CBA00301) and Centre of Quantum Technologies in Singapore partially 	under their support. The research was also supported by Research Grants Council of the Hong Kong S.A.R. (Project no. CUHK418710, CUHK419011). 

\bibliography{logrank}

\begin{thebibliography}{BCWdW01}

\bibitem[Amb96]{Amb96}
Andris Ambainis.
\newblock Communication complexity in a 3-computer model.
\newblock {\em Algorithmica}, 16(3):298--301, 1996.

\bibitem[BCWdW01]{BCWdW01}
Harry Buhrman, Richard Cleve, John Watrous, and Ronald de~Wolf.
\newblock Quantum fingerprinting.
\newblock {\em Physical Review Letters}, 87(16), 2001.

\bibitem[BdW01]{BdW01}
Harry Buhrman and Ronald de~Wolf.
\newblock Communication complexity lower bounds by polynomials.
\newblock In {\em Proceedings of the 16th Annual IEEE Conference on
  Computational Complexity}, pages 120--130, 2001.

\bibitem[BK97]{BK97}
L{\'a}szl{\'o} Babai and Peter~G. Kimmel.
\newblock Randomized simultaneous messages: Solution of a problem of {Y}ao in
  communication complexity.
\newblock In {\em IEEE Conference on Computational Complexity}, pages 239--246,
  1997.

\bibitem[BS91]{BS91}
J.~Bruck and Roman Smolensky.
\newblock Polynomial threshold functions, {AC}$^0$ functions and spectral
  norms.
\newblock In {\em Proceedings of the 32nd Annual IEEE Symposium Foundations of
  Computer Science}, pages 632--641, 1991.

\bibitem[CR12]{CR12}
Amit Chakrabarti and Oded Regev.
\newblock An optimal lower bound on the communication complexity of
  {G}ap-{H}amming-{D}istance.
\newblock {\em SIAM Journal on Computing}, 41(5):1299--1317, 2012.

\bibitem[DP12]{DP12}
Devdatt Dubhash and Alessandro Panconesi.
\newblock {\em Concentration of Measure for the Analysis of Randomized
  Algorithms}.
\newblock Cambridge University Press, 2012.

\bibitem[GKdW04]{GKdW04}
Dmitry Gavinsky, Julia Kempe, and Ronald de~Wolf.
\newblock Quantum communication cannot simulate a public coin.
\newblock {\em arXiv:quant-ph/0411051}, 2004.

\bibitem[Gro97]{Gro97}
Vince Grolmusz.
\newblock On the power of circuits with gates of low {L1} norms.
\newblock {\em Theoretical Computer Science}, 188(1-2):117--128, 1997.

\bibitem[HSZZ06]{HSZZ06}
Wei Huang, Yaoyun Shi, Shengyu Zhang, and Yufan Zhu.
\newblock The communication complexity of the {H}amming {D}istance problem.
\newblock {\em Information Processing Letters}, 99(4):149--153, 2006.

\bibitem[JKS08]{JKS08}
T.~S. Jayram, Ravi Kumar, and D.~Sivakumar.
\newblock The one-way communication complexity of {H}amming {D}istance.
\newblock {\em Theory of Computing}, 4(6):129--135, 2008.

\bibitem[KS13]{KS13}
Raghav Kulkarni and Miklos Santha.
\newblock Query complexity of matroids.
\newblock In {\em Proceedings of the 8th International Conference on Algorithms
  and Complexity}, 2013.

\bibitem[LLZ11]{LLZ11}
Ming~Lam Leung, Yang Li, and Shengyu Zhang.
\newblock Tight bounds on the communication complexity of symmetric {XOR}
  functions in one-way and {SMP} models.
\newblock In {\em Proceedings of the 8th Annual Conference on Theory and
  Applications of Models of Computation}, pages 403--408, 2011.

\bibitem[LS88]{LS88}
L{\'a}szl{\'o} Lov{\'a}sz and Michael~E. Saks.
\newblock Lattices, {M}{\"o}bius functions and communication complexity.
\newblock In {\em Proceedings of the 29th Annual Symposium on Foundations of
  Computer Science}, pages 81--90, 1988.

\bibitem[LS09]{LS09}
Troy Lee and Adi Shraibman.
\newblock Lower bounds on communication complexity.
\newblock {\em Foundations and Trends in Theoretical Computer Science},
  3(4):263--398, 2009.

\bibitem[LZ10]{LZ10}
Troy Lee and Shengyu Zhang.
\newblock Composition theorems in communication complexity.
\newblock In {\em Proceedings of the 37th International Colloquium on Automata,
  Languages and Programming (ICALP)}, pages 475--489, 2010.

\bibitem[LZ13]{LZ13}
Yang Liu and Shengyu Zhang.
\newblock Quantum and randomized communication complexity of {XOR} functions in
  the {SMP} model.
\newblock {\em ECCC}, 20(10), 2013.

\bibitem[MO10]{MO10}
Ashley Montanaro and Tobias Osborne.
\newblock On the communication complexity of {XOR} functions.
\newblock {\em arXiv:}, 0909.3392v2, 2010.

\bibitem[MS82]{MS82}
Kurt Mehlhorn and Erik~M. Schmidt.
\newblock Las {V}egas is better than determinism in {VLSI} and distributed
  computing (extended abstract).
\newblock In {\em Proceedings of the fourteenth annual ACM symposium on Theory
  of computing}, pages 330--337, 1982.

\bibitem[NS96]{NS96}
Ilan Newman and Mario Szegedy.
\newblock Public vs. private coin flips in one round communication games.
\newblock In {\em Proceedings of the Twenty-Eighth Annual ACM Symposium on the
  Theory of Computing}, pages 561--570, 1996.

\bibitem[She12]{She12}
Alexander~A. Sherstov.
\newblock The communication complexity of {G}ap {H}amming {D}istance.
\newblock {\em Theory of Computing}, 8(8):197--208, 2012.

\bibitem[SW12]{SW12}
Xiaoming Sun and Chengu Wang.
\newblock Randomized communication complexity for linear algebra problems over
  finite fields.
\newblock In {\em Proceedings of the 29th International Symposium on
  Theoretical Aspects of Computer Science}, pages 477--488, 2012.

\bibitem[TWXZ13]{TWXZ13}
Hing~Yin Tsang, Chung~Hoi Wong, Ning Xie, and Shengyu Zhang.
\newblock Fourier sparsity, spectral norm, and the {L}og-rank {C}onjecture.
\newblock In {\em Proceedings of the 54th Annual IEEE Symposium Foundations of
  Computer Science}, 2013.

\bibitem[Vid12]{Vid12}
Thomas Vidick.
\newblock A concentration inequality for the overlap of a vector on a large
  set, with application to the communication complexity of the
  {G}ap-{H}amming-{D}istance problem.
\newblock {\em Chicago Journal of Theoretical Computer Science}, 2012(1), July
  2012.

\bibitem[Yao79]{Yao79}
Andrew Chi-Chih Yao.
\newblock Some complexity questions related to distributive computing.
\newblock In {\em Proceedings of the Eleventh Annual ACM Symposium on Theory of
  Computing (STOC)}, pages 209--213, 1979.

\bibitem[Yao03]{Yao03}
Andrew Chi-Chih Yao.
\newblock On the power of quantum fingerprinting.
\newblock In {\em Proceedings of the Thirty-Fifth Annual ACM Symposium on
  Theory of Computing}, pages 77--81, 2003.

\bibitem[ZS09]{ZS09}
Zhiqiang Zhang and Yaoyun Shi.
\newblock Communication complexities of symmetric {XOR} functions.
\newblock {\em Quantum Information {\&} Computation}, 9(3):255--263, 2009.

\bibitem[ZS10]{ZS10}
Zhiqiang Zhang and Yaoyun Shi.
\newblock On the parity complexity measures of {B}oolean functions.
\newblock {\em Theoretical Computer Science}, 411(26-28):2612--2618, 2010.

\end{thebibliography}

\end{document}